% Quantum techniques for studying equilibrium in chemical reaction networks
% John Baez and Brendan Fong
% 2014-1-31 - JB (corrections after second arXiv version)
% in LaTeX

\documentclass{article} 
\hfuzz = 9pt   

\usepackage[centertags,sumlimits,intlimits,namelimits,reqno]{amsmath}
\usepackage{latexsym,amsfonts,amssymb,exscale,enumerate,amsthm}
\usepackage[colorlinks,linkcolor=blue,citecolor=red,urlcolor=blue]{hyperref}

\usepackage{tikz}

        \newcommand\N{{\mathbb N}}   
           
        \newcommand\R{{\mathbb R}}   
        \newcommand\maps{{\colon}}   
        \newcommand{\define}[1]{{\bf \boldmath #1}}

	\newtheorem{theorem}{Theorem}

	\newtheorem{definition}{Definition}

       \newtheorem*{proposition*}{Proposition}

\theoremstyle{remark}

\newtheorem*{example*}{Example}
       
	\textwidth 6in   
	\textheight 8.5in
 	\evensidemargin .25in   
	\oddsidemargin .25in   
	\topmargin .25in   
	\headsep 0in   
	\headheight 0in   
	\footskip .5in   
%If you want single spaced copy, delete the next two lines.   
%	\parskip 1.75\parskip plus 3pt minus 1pt   
%	\renewcommand{\baselinestretch}{1.5}   
	\pagestyle{plain}   
	\pagenumbering{arabic}   
	
	\begin{document}   
 
	\begin{center}   
	{\bf  Quantum Techniques for Studying Equilibrium \\
in Reaction Networks \\}   
        \vspace{0.3cm}
	{\em John\ C.\ Baez \\}
        \vspace{0.3cm}
	{\small 
        Department of Mathematics \\
        University of California \\
        Riverside CA 92521, USA \\
        and \\
        Centre for Quantum Technologies  \\
        National University of Singapore \\
        Singapore 117543 \\
 } 
        \vspace{0.3cm}
	{\em Brendan Fong \\}
        \vspace{0.3cm}
        {\small Department of Computer Science  \\
        University of Oxford  \\
        United Kingdom OX1 3QD  \\ }
	\vspace{0.3cm}   
        {\small email:  baez@math.ucr.edu, brendan.fong@cs.ox.ac.uk\\} 
	\vspace{0.3cm}   
	{\small June 11, 2013}
	\vspace{0.3cm}   
	\end{center}

\begin{abstract}
\noindent Anderson, Craciun, and Kurtz have proved that a
stochastically modelled chemical reaction system with mass-action
kinetics admits a stationary distribution when the deterministic model
of the same system with mass-action kinetics admits an equilibrium
solution obeying a certain `complex balance' condition.  Here we
present a proof of their theorem using tools from the theory of
second quantization: Fock space, annihilation and creation operators,
and coherent states.  This is an example of `stochastic mechanics', 
where we take techniques from quantum mechanics and replace 
amplitudes by probabilities. We explain how the systems studied here can
be described using either of two equivalent formalisms: reaction networks, 
as in chemistry, or stochastic Petri nets, as in some other disciplines.
\end{abstract}

\section{Introduction}

Systems with uncertainty are often modelled in two ways. The first of
these is via the deterministic evolution of the expected state of the system. 
The second, more detailed approach describes the time evolution of a 
probability distribution or quantum state.  Often the second approach
reduces to the first in a suitable limit.

One prominent class of systems modelled in both ways are field theories, 
which come in both classical and quantum versions.   Another class of such 
systems are chemical reactions.  Chemists describe the time evolution of
chemical concentrations in a deterministic way using \emph{rate equations},
but they also describe interactions between chemical species stochastically,
using \emph{master equations}.  

In this paper we aim to demonstrate the analogy between quantum field
theory and chemistry, casting our discussion of chemical master equations 
in the language of Fock space and creation and annihilation operators.   
This work is part of a broader program emphasizing the links between quantum
mechanics and a subject we call `stochastic mechanics', where probabilities
replace amplitudes \cite{BB,BF,B}.

Using second quantization methods to model classical systems goes back
to Doi \cite{D}, and was developed further by Grassberger and
Scheunert \cite{GS}, among others.  However, there is an extensive
line of work in mathematical chemistry that does not use these
techniques, but might profit from them.  To illustrate how they work,
we use them here to provide a new, compact proof of a theorem by
Anderson, Craciun, and Kurtz \cite{ACK}.  This theorem says that under
certain assumptions, equilibrium solutions of deterministic evolution
equations give equilibrium solutions of stochastic ones.

Chemists specify both the rate equation and the master equation
starting from a `reaction network'.  The language of Petri nets is
equivalent to that of reaction networks, but more widely used beyond
discussions of chemistry.  In particular, stochastic Petri nets are
used in systems biology, epidemiology, and ecology \cite{GP,K,KRS,
Wil06}.  Unfortunately, results proved in one language are not
instantly accessible to researchers only familiar with the other.
Thus, we start by explaining how to translate between reaction
networks and Petri nets.  Keeping this in mind, it becomes clear that
the Anderson--Craciun--Kurtz theorem is not merely a result about
chemical reactions: it applies to a wide class of stochastic systems.

\section{Reaction networks and Petri nets}

Reaction networks model the salient chemical interactions in an
isolated chemical system. These networks, first defined by Aris
\cite{A}, consist of a set of chemical species together with a set of
reactions that may take place among them.  More precisely:

\begin{definition}
A \define{reaction network} $(S,K,T,s,t)$ consists of:
\begin{itemize}
\item a finite set $S$,
\item a finite set $K$ of multisets with elements in $S$,
\item a finite set $T$,
\item functions $s,t \maps T \to K$.
\end{itemize}
We call the elements of $S$ \define{species}, those of $K$
\define{complexes}, and those of $T$ \define{reactions}.  We call
$s(\tau)$ and $t(\tau)$ the \define{reactants} and \define{products}
of the reaction $\tau \in T$ respectively.
\end{definition}

\noindent
We should note that chemists use the term `chemical reaction network'.
While chemists call the elements of $T$ `reactions', the Petri net
literature uses the term `transitions' for essentially the same thing.
We often write complexes as linear combinations of elements of $S$
with natural number coefficients, writing for example the multiset
$\{A,A,B\}$ as $2A+B$.  If a reaction $\tau \in T$ has the complex
$\kappa$ as its reactants and $\kappa'$ as its products, we write
$\tau \maps \kappa \to \kappa'$.

Reaction networks admit a useful and easy-to-read representation as
directed graphs, with $K$ as the set of vertices, and a directed edge
from the complex $a \in K$ to the complex $b \in K$ for each reaction
with $a$ as its reactants and $b$ as its products. For example,
consider the reaction network with
\[ S = \big\{A,B,C,AC\big\} \]
as its set of species,
\[ K = \big\{\varnothing, A, B,AC,A+C,2B+C\big\} \]
as its set of complexes, and with four reactions
\[   R = \{ \alpha, \beta, \gamma, \delta \}  \]
such that
\[
\begin{array}{l}    \alpha  \maps \varnothing \to A  \\
\beta \maps  B \to \varnothing  \\
\gamma \maps A + C \to AC \\
\delta \maps AC \to 2B + C  
\end{array}
\]
This reaction network describes the transformation of one molecule of
species $A$ into two of species $B$ with the help of the catalyst $C$,
in a system where $A$ is replenished and $B$ is consumed.  It may be
pictured as this graph:
\begin{align*}
B \stackrel{\alpha}{\longrightarrow} 
&\varnothing \stackrel{\beta}{\longrightarrow} A \\
A+C \stackrel{\gamma}{\longrightarrow} A 
& C \stackrel{\delta}{\longrightarrow} 2B+C 
\end{align*}

In general, the graph conveys all the information in the reaction
network, except when there are species that do not appear in any
complex.  The graph may have multiple edges or self-loops.  It is thus
the kind of graph sometimes called a `directed multigraph' or
`quiver'.  These terms may obscure the simplicity of the concept, so
for the purposes of this paper we simply say:

\begin{definition} A \define{graph} consists of a set $E$ of 
\define{edges}, a set $V$ of \define{vertices}, and \define{source}
and \define{target} maps $s,t \maps E \to V$.
\end{definition}

\noindent A reaction network is thus equivalent to a finite set $S$ of
species, a finite set $K$ of multisubsets of $S$, and a graph with $K$
as its set of vertices and some finite set $R$ of edges.

An equivalent framework for describing chemical processes, now used
more generally to describe concurrent processes across a wide variety
of sciences, is that of Petri nets \cite{PR}.

\begin{definition}
A \define{Petri net} $(S,T,i,o)$ consists of:
\begin{itemize}
\item a finite set $S$,
\item a finite set $T$,
\item functions $i,o: S \times T \to \N$.
\end{itemize}
We call the elements
of $S$ \define{species} and those of $T$ \define{transitions}.  We 
think of $i$ and $o$ as specifying the number of each species appearing 
as the \define{inputs} and \define{outputs} respectively to each transition. 
\end{definition}

\noindent
In the Petri net literature the species are often called `places'.
For convenience we shall often write $k=|S|$ for the number of species
present in a Petri net, and identify the set $S$ with the set $\{1,
\dots, k\}$.

Each reaction network may be considered as equivalent to some Petri
net.  Here the set of species of the Petri net is the same as that of
the equivalent reaction network.  The set of transitions of the Petri
net corresponds to the set of reactions of the reaction network.  The
input and output functions of each transition respectively specify the
number of times each species appears as a reactant and product of the
corresponding reaction. For example, the above example of a reaction
network is equivalent to the Petri net with
\begin{align*}
S &= \big\{A,B,C,AC\big\}, \\
T &= \big\{\alpha, \beta, \gamma, \delta \big\},
\end{align*}
the input function $i \maps S \times T \to \N$ taking value $1$ on the pairs 
$(A,\beta)$, $(C,\beta)$, $(AC,\gamma)$ and value $0$ otherwise, and the 
output function $o \maps S \times T \to \N$ taking value $1$ on $(A,\alpha)$ 
and $(AC,\beta)$, value $2$ on $(B,\gamma)$, and value $0$ otherwise.

Petri nets are also commonly represented as graphs, although not in
the same way as their equivalent reaction networks. In this case we
draw a graph whose set of vertices is $S \cup T$, with precisely
$i(j,\tau)$ directed edges drawn from the species $j \in S$ to the
transition $\tau \in T$, and precisely $o(j,\tau)$ directed edges
drawn from the transition $\tau \in T$ to the species $j \in S$.  So,
this graph is `bipartite': there are no edges going between two
vertices that both represent states, and no edges going between two
vertices that both represent transitions.

Our example Petri net is represented by this graph: 
\[
\begin{tikzpicture}
\node [shape=rectangle,minimum height=15pt,minimum width=15pt,draw=black] at (-4,0) {$\alpha$};
\node [shape=circle,draw=black] at (-2.5,0) {$A$};
\node [shape=rectangle,minimum height=15pt,minimum width=15pt,draw=black] at (-1,0) {$\beta$};
\node [shape=circle,draw=black] at (0,1) {$C$};
\node [shape=circle,inner sep = 1pt, draw=black] at (0,-1) {$AC$};
\node [shape=rectangle,minimum height=15pt,minimum width=15pt,draw=black] at (1,0) {$\gamma$};
\node [shape=circle,draw=black] at (2.5,0) {$B$};
\node [shape=rectangle,minimum height=15pt,minimum width=15pt,draw=black] at (4,0) {$\delta$};
\draw [->, thick,shorten >=12pt, shorten <=12pt]  (-4,0) to (-2.5,0); 
\draw [->, thick,shorten >=12pt, shorten <=12pt]  (-2.5,0) to (-1,0); 
\draw [->, thick,shorten >=12pt, shorten <=12pt]  (0,1) to (-1,0); 
\draw [->, thick,shorten >=12pt, shorten <=12pt]  (-1,0) to (0,-1); 
\draw [->, thick,shorten >=12pt, shorten <=12pt]  (0,-1) to (1,0); 
\draw [->, thick,shorten >=12pt, shorten <=12pt]  (1,0) to (0,1); 
\draw [->, thick,shorten >=12pt, shorten <=12pt]  (1,0.08) to (2.5,0.08); 
\draw [->, thick,shorten >=12pt, shorten <=12pt]  (1,-0.08) to (2.5,-0.08); 
\draw [->, thick,shorten >=12pt, shorten <=12pt]  (2.5,0) to (4,0); 
\end{tikzpicture}
\]
Note that there are two edges from $\gamma$ to $B$, as the transition
$\gamma$ produces two instances of species $B$.

\section{Stochastic Petri nets with mass-action kinetics}

It is often useful to talk of the amount of each chemical present in a
chemical system at a given time. The possible amounts are represented
by what are termed the `states' of a Petri net.  By a \define{pure
state} of a Petri net we refer to a vector $x \in \N^{k}$ with $i$th
entry $x_i$ specifying the number of instances of the $i$th species.
This generalises in two ways.  First, we define a \define{classical
state} to be any vector $x \in [0,\infty)^{k}$ of nonnegative real
numbers, and think of such a state as specifying an expected number or
concentration of the species present. Second, we define a
\define{mixed state} to be a probability distribution $\psi$ over the
pure states of the Petri net. Of great interest is how states---of
both generalized types---evolve or change over time.

In modelling the evolution of the state of a Petri net, we must first
specify a kinetics for the system. In this paper we consider systems
that evolve by the law of mass action \cite{HJ}. Very roughly, this
law states that the rate at which a transition occurs is proportional
to the product of the concentrations of all its input species.  We
call the constant of proportionality for each transition its `rate
constant'. A Petri net where each transition has a rate constant is
called a `stochastic Petri net':

\begin{definition}
A \define{stochastic Petri net} $(S,T,i,o,r)$ consists of a Petri net
$(S,T,i,o)$ together with a function $r \maps T \to (0,\infty)$
specifying a \define{rate constant} for each transition.
\end{definition}

\noindent
Just as Petri nets are equivalent to reaction networks, a stochastic Petri
net is equivalent to a \define{stochastic reaction network}, meaning 
a reaction network together with a rate constant for each reaction.

A stochastic Petri net gives rise to an evolution principle for each
of our generalized notions of state above. The `rate equation'
describes the system with continuous parameters changing in a
deterministic way, saying how a classical state changes with time. On
the other hand, the `master equation' describes the system with a
mixed state---a probability distribution over the pure states of the
system---and treats the evolution of pure states as a stochastic
process, saying how this probability distribution changes with time.

The master equation is more fundamental.  It plays a role similar to
the equations of quantum electrodynamics, which describe the
amplitudes for creating and annihilating individual photons. The rate
equation is less fundamental: it resembles the classical Maxwell
equations, which describe changes in the electromagnetic field in a
deterministic way. The classical Maxwell equations are an
approximation to quantum electrodynamics which becomes good
in a certain `classical limit' where there are many photons 
in each state.  Similarly, although the rate equation and
master equation describe distinct notions of evolution, under certain
conditions the rate equation can be derived from the master equation
in the limit where the number of things of each species become large,
and the uncertainty in these numbers becomes negligible \cite{AK,B}. In
this case taking the expected value of an equilibrium solution for
the master equation gives an equilibrium solution of the rate
equation.

The Anderson--Craciun--Kurtz theorem shows that in some cases the
converse holds too: from an equilibrium solution of the
rate equation we can sometimes construct an equilibrium solution of
the master equation.  Remarkably, this solution is precise rather than
approximate---and it is given by a simple formula.  

\subsection*{The rate equation}

Let $(S,T,i,o,r)$ be a stochastic Petri net with $k = |S|$
species. For each species $j \in S$ and transition $\tau \in T$, 
write $s_j(\tau) = i(j,\tau)$ for the number of times the species $j$
appears as an input to the transition $\tau$, and similarly write
$t_j(\tau) = o(j,\tau)$ for the number of times it appears an output.
We thus have vectors
\[   s(\tau) = (s_1(\tau), \dots , s_k(\tau)) \in \N^k \]
and
\[   t(\tau) =  (t_1(\tau), \dots , t_k(\tau)) \in \N^k \]
describing the reactants and products of the transition $\tau$.

The rate equation models the time evolution of a classical state of a
stochastic Petri net in a deterministic way. According to the
deterministic law of mass action, for each transition $\tau$ the rate
of change of $x$ is the product of:
\begin{itemize}
\item the change $t(\tau) - s(\tau)$
in numbers of each species due to $\tau$;
\item the concentration of each input species $i$ of $\tau$ raised to
the power given by the number of times it appears as an input, namely
$s_i(\tau)$;
\item the rate constant $r(\tau)$ of $\tau$.
\end{itemize}

\begin{definition} 
The \define{rate equation} for a stochastic Petri net $(S,T,i,o,r)$ is
\[
\frac{d}{d t}x(t) = \sum_{\tau \in T} r(\tau) (t(\tau) - 
s(\tau)) x(t)^{s(\tau)},
\]
where $x \maps \R \to [0,\infty)^k$ and we have
have used multi-index notation to define
\[  x^{s(\tau)} = x_1^{s_1(\tau)} \cdots x_k^{s_k(\tau)}. \]
\end{definition}

We will in particular concern ourselves with states $c \in
[0,\infty)^k$ for which $x(t) = c$ for all $t \in \R$ is a solution of
the rate equation. In such a state the rate equation says that the net
rate of production of each species is zero. We call such a state an
\define{equilibrium solution} of the rate equation.

\subsection*{Second quantization and the master equation}

Modelling the evolution of a mixed state is a bit more involved, and
here methods from quantum mechanics provide an interesting
perspective. Following second quantization techniques used in quantum
field theory, we begin our discussion by introducing a power series
notation to describe the mixed states of a stochastic Petri net, and
then creation and annihilation operators to describe how these mixed
states may evolve. This will allow us to give a compact expression of
the master equation.

Fix some mixed state. For each $n \in \N^k$, let $\psi_n$ be the
probability that for each $i$ we have exactly $n_i \in \N$ of the
$i$th species present in this mixed state. We then express this mixed
state as a formal power series in some formal variables $z_1, \dots,
z_k$, with the coefficient of 
\[    z^n = z_1^{n_1} \cdots z_k^{n_k} \]
being the probability $\psi_n$.  That is, we write our mixed state as
$\Psi = \sum_{n \in \N^k} \psi_n z^n$. Note that as $\Psi$
represents a mixed state, and a mixed state is a probability
distribution, the coefficients $\psi_n$ of must sum to 1. Indeed, the
mixed states are precisely the formal power series $\Psi$ with
coefficients summing to 1. The simplest examples of states are the
monomials $z^n = z_1^{n_1} \cdots z_k^{n_k}$; these are the pure
states.

The simplest changes to the state of a Petri net are the addition, or
creation, of a new instance of the $i$th species, and the removal, or
annihilation, of an existing instance of the $i$th species. Moreover, all
changes of state can be seen as sequences of such operations. To
provide notation for processes on mixed states, we thus define
creation and annihilation operators on formal power series. These
definitions are inspired by their analogues in quantum field
theory. Our notation here follows that used there directly, where the
creation and annihilation operators may be seen as adjoint linear
operators.

Let $1 \le i \le k$. The \define{creation operator} $a_i^\dagger$ of the
$i$th species is given by
\[
a_i^\dagger \Psi = z_i \Psi.
\]
This takes a pure state $z^n = z_1^{n_1} \cdots z_i^{n_i}\cdots
z_k^{n_k}$ to the pure state $z_1^{n_1} \cdots z_i^{n_i+1} \cdots
z_k^{n_k}$ with one additional instance of the $i$th species. The
corresponding \define{annihilation operator} is given by formal
differentiation:
\[
a_i \Psi = \frac{\partial}{\partial z_i} \Psi.
\]
Given a pure state $z^n$, this represents moving to the state with one
fewer instance of the $i$th species, albeit with a scaling coefficient
$n_i$. This may be interpreted combinatorially, representing the fact
that there are $n_i$ ways to annihilate one of the $n_i$ instances of
the $i$th species present. This is appropriate as in this setting the
law of mass action states that the rate of a transition is
proportional to the number of distinct subsets of the present species
that can form an input to the transition.

We now have the tools required for our definition of the
master equation:

\begin{definition}
The \define{master equation} for a stochastic Petri net $(S,T,i,o,r)$ is
\[
\frac{d}{d t} \Psi(t) = H \Psi(t),
\]
where the \define{Hamiltonian} $H$ is given by
\[
H = \sum_{\tau \in T} r(\tau) \, 
\left({a^\dagger}^{t(\tau)} - {a^\dagger}^{s(\tau)}\right) \, a^{s(\tau)}.
\]
Here again we use multi-index notation to define, for any $n \in
\N^k$,
\[   a^n = a_1^{n_1} \cdots a_k^{n_k} \]
and
\[   {a^\dagger}^n = {a_1^\dagger}^{n_1} \cdots {a_k^\dagger}^{n_k} .\]
\end{definition}

For each transition $\tau$, the first term ${a^\dagger}^{t(\tau)}
a^{s(\tau)}$ of the Hamiltonian describes how $s_i(\tau)$ things of
the $i$th species get annihilated, and $t_i(\tau)$ things of the $i$th
species get created. The second term $- {a^\dagger}^{s(\tau)}
a^{s(\tau)}$ is a bit harder to understand, but it describes how the
probability that nothing happens---that we remain in the same pure
state---decreases as time passes. This term must take precisely the
form it does to ensure conservation of total probability.  For a more
detailed derivation of the Hamiltonian $H$, see the companion paper
\cite{B}.

\section{The Anderson--Craciun--Kurtz Theorem}

From a certain class of equilibrium solutions of the rate equation,
Anderson, Craciun, and Kurtz have proved that we may obtain 
equilibrium solutions of the master equation \cite{ACK}. We now
translate their proof into the language of annihilation and creation
operators, emphasizing the analogies between our Petri net systems and
ideas in quantum mechanics.  In particular, their equilibrium solution
of the master equation is what people call a `coherent state' in quantum 
mechanics.

To construct an equilibrium solution of the master equation from one
for the rate equation, we require a `complex balanced' solution. Here
the term `complex' again refers to a collection of instances of the
given species, and a complex balanced solution is a solution in which
not only the net rate of production of each species is zero, but
furthermore the net rate of production of each possible
\emph{collection} of species is zero. For our Petri net, complexes are
elements of $\N^k$, and the complexes of particular interest are the
input complex $s(\tau)$ and the output complex $t(\tau)$ of each
transition $\tau$.

\begin{definition}
We say a classical state $c \in [0,\infty)^k$ is \define{complex
balanced} if for all complexes $\kappa \in \N^k$ we have
\[
\sum_{\{\tau : s(\tau) = \kappa\}} r(\tau) c^{s(\tau)} =
\sum_{\{\tau : t(\tau) = \kappa\}} r(\tau) c^{s(\tau)}.  
\label{complex_balanced}
\]
\end{definition}
The left hand side of the above equation, which sums over the
transitions in which the input complex is $\kappa$, gives the rate of
consumption of the complex $\kappa$.  The right hand side, which sums
over the transitions in which the output complex is $\kappa$, gives
the rate of production of $\kappa$.  Thus the above equation requires
that the net rate of production of the complex $\kappa$ is zero when
the number of things of each species is given by the vector $c$.

Any complex balanced classical state gives an equilibrium solution of
the rate equation. Indeed, if $c \in [0,\infty)^k$ is a complex
balanced classical state, then if $x(t) = c$ for all $t$ we have
\begin{multline*}
\frac{d}{d t}x(t) = 
\sum_{\tau \in T} r(\tau) (t(\tau) - s(\tau)) c^{s(\tau)}  =
\sum_{\kappa \in \N^k}\kappa
\left(\sum_{\{\tau:t(\tau) = \kappa\}} r(\tau) c^{s(\tau)} 
- \sum_{\{\tau:s(\tau) = \kappa\}} r(\tau)c^{s(\tau)}\right)
=0,
\end{multline*}
and so $x(t) = c$ is an equilibrium solution.  In general, not every
equilibrium solution of the rate equation is complex balanced.
However, the `deficiency zero theorem' \cite{F,HJ} says this holds for
a large class of stochastic Petri nets.

The Anderson--Craciun--Kurtz theorem says that any complex balanced
classical state also gives rise to an equilibrium solution of the
master equation.  Such solutions have a special form: they are
`coherent states'. For any $c \in [0,\infty)^k$, we define the
\define{coherent state} with expected value $c$ to be the mixed state
\[
\Psi_c = \displaystyle{\frac{e^{c \cdot z}}{e^c}} 
\]
where $c \cdot z$ is the dot product of the vectors $c$ and $z$, and
we set $e^c = e^{c_1+\dots+c_n}$.  Equivalently,
\[  \Psi_c = \frac{1}{e^c} \sum_{n \in \N^k} \frac{c^n}{n!}z^n, \] 
where $c^n$ and $z^n$ are defined as products in our usual way, and
$n! = n_1! \, \cdots \, n_k!$. The name `coherent state' comes from
quantum mechanics \cite{KS}, in which we think of the coherent state
$\Psi_c$ as the quantum state that best approximates the classical
state $c$. In the state $\Psi_c$, the probability of having $n_i$
things of the $i$th species is equal to
\[                e^{-c_i} \, \displaystyle{\frac{ c_i^{n_i}}{n_i!} }. \] 
This is precisely the definition of a Poisson
distribution with mean equal to $c_i$. The state $\Psi_c$ is thus a
product of independent Poisson distributions.

It is remarkable that such a simple state can give an equilibrium
solution of the master equation, even for very complicated stochastic
Petri nets.  But it is nonetheless true when $c$ is complex balanced:

\begin{theorem}[{\bf Anderson--Craciun--Kurtz}]
Suppose $c \in [0,\infty)^k$ is a complex balanced classical state of
a stochastic Petri net, and suppose that $H$ is the Hamiltonian for
the master equation of this stochastic Petri net. Then $H \Psi_c = 0$.
\end{theorem}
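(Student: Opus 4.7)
The plan is to use the fact that the coherent state $\Psi_c$ is a simultaneous eigenvector of all the annihilation operators, then reduce the statement $H\Psi_c = 0$ to an identity that is precisely the complex balance condition applied term by term.

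First, I would record the key eigenvalue property: since $a_i = \partial/\partial z_i$ and $\Psi_c = e^{c \cdot z}/e^c$, differentiating brings down the factor $c_i$, so $a_i \Psi_c = c_i \Psi_c$. Iterating and using multi-index notation gives $a^{s(\tau)} \Psi_c = c^{s(\tau)} \Psi_c$ for every transition $\tau$. This is the analogue of the familiar fact in quantum mechanics that coherent states diagonalise the annihilation operators, and it is the single observation that makes the whole computation work.

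Next I would apply $H$ to $\Psi_c$ and push the annihilation operators through first. Using the eigenvalue property, and then the fact that ${a^\dagger}^n$ acts as multiplication by $z^n$, I get
\[
H\Psi_c \;=\; \Psi_c \sum_{\tau \in T} r(\tau)\, c^{s(\tau)} \bigl( z^{t(\tau)} - z^{s(\tau)} \bigr).
\]
The remaining step is purely algebraic: I would regroup this sum according to the value of the input or output complex, writing
\[
\sum_{\tau \in T} r(\tau)\, c^{s(\tau)}\, z^{t(\tau)} \;-\; \sum_{\tau \in T} r(\tau)\, c^{s(\tau)}\, z^{s(\tau)}
\;=\; \sum_{\kappa \in \N^k} z^{\kappa} \Biggl(\, \sum_{\{\tau:\, t(\tau)=\kappa\}} r(\tau)\, c^{s(\tau)} \;-\; \sum_{\{\tau:\, s(\tau)=\kappa\}} r(\tau)\, c^{s(\tau)} \Biggr).
\]
By the complex balanced hypothesis, each coefficient in parentheses vanishes, so the whole sum is zero and therefore $H\Psi_c = 0$.

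There is no real obstacle; the only subtlety is bookkeeping. I would want to check carefully that the annihilation operators on the right of each summand of $H$ really do act on $\Psi_c$ before the creation operators come into play (they do, by the given form of $H$), and that regrouping an absolutely convergent sum of monomials by exponent is legitimate in the formal power series ring (it is, since only finitely many transitions contribute). Once those points are noted, the proof is a two-line consequence of the coherent-state eigenvalue identity and the definition of complex balance.
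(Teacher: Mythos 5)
Your proof is correct and follows essentially the same route as the paper: both use the coherent-state eigenvalue identity $a^{s(\tau)}e^{c\cdot z} = c^{s(\tau)}e^{c\cdot z}$ to reduce $H\Psi_c = 0$ to the vanishing of $\sum_{\tau} r(\tau)c^{s(\tau)}(z^{t(\tau)} - z^{s(\tau)})$, and then regroup by complex and invoke complex balance. Your version is in fact slightly cleaner, since you cancel the common factor $e^{c\cdot z}$ and group the remaining polynomial directly, whereas the paper first expands $e^{c\cdot z}$ as a power series and reindexes before performing the same regrouping.
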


\begin{proof}
Since $\Psi_c$ is a constant multiple of $e^{c \cdot z}$, it suffices to
show $H e^{c \cdot z} = 0$. Recall that
\[
H = \sum_{\tau \in T} r(\tau) 
\left( {a^\dagger}^{t(\tau)} -{a^\dagger}^{s(\tau)} \right) \, a^{s(\tau)}.
\]
As the annihilation operator $a_i$ is given by differentiation with
respect to $z_i$, while the creation operator $a^\dagger_i$ is just
multiplying by $z_i$, we have 
\[   a^{s(\tau)} e^{c \cdot z} = c^{s(\tau)} e^{c \cdot z},  \qquad 
 {a^\dagger}^{s(\tau)} e^{c \cdot z} = z^{s(\tau)} e^{c \cdot z}, \]
and thus
\[
H e^{c \cdot z} =
 \sum_{\tau \in T} r(\tau) \, c^{s(\tau)} \left( z^{t(\tau)} 
- z^{s(\tau)} \right) e^{c \cdot z} = 
\sum_{i \in \N^k} \sum_{\tau \in T} r(\tau)c^{s(\tau)}\left(z^{t(\tau)}
\frac{c^i}{i!}z^i - z^{s(\tau)}\frac{c^i}{i!}z^i\right),
\]
where the second equality is given by expanding $e^{c \cdot z}$. If we then
define negative powers to be zero and reindex, we may write
\[
H e^{c \cdot z}  = 
\sum_{i \in \N^k} \sum_{\tau \in T} 
r(\tau)c^{s(\tau)}\left(\frac{c^{i-t(\tau)}}{(i-t(\tau))!}z^i - 
\frac{c^{i-s(\tau)}}{(i-s(\tau))!}z^i\right).
\]

Hence, to show $H e^{c \cdot z} = 0$, it is enough to show
\[
\sum_{i \in \N^k} \sum_{\tau \in T} r(\tau)c^{s(\tau)}
\frac{c^{i-t(\tau)}}{(i-t(\tau))!}z^i =\sum_{i \in \N^k} 
\sum_{\tau \in T} r(\tau)c^{s(\tau)}\frac{c^{i-s(\tau)}}{(i-s(\tau))!}z^i.
\]
We do this by splitting the sum over the transitions $T$ according to
output and then input complexes, making use of the complex balanced
condition in the equality marked ($\ast$):
\begin{align*}
&\quad \sum_{i \in \N^k} \sum_{\tau \in T} 
r(\tau)c^{s(\tau)}\frac{c^{i-t(\tau)}}{(i-t(\tau))!}z^i 
\\
&=\sum_{i \in \N^k} \sum_{\kappa \in \N^k} 
\sum_{\{\tau : t(\tau)=\kappa\}}  
r(\tau)c^{s(\tau)}\frac{c^{i-t(\tau)}}{(i-t(\tau))!}  \, z^i 
\\
&= \sum_{i \in \N^k} \sum_{\kappa \in \N^k} 
\frac{c^{i-\kappa}}{(i-\kappa)!}z^i \sum_{\{\tau : t(\tau) = \kappa\}}
r(\tau)c^{s(\tau)} 
\\
&= \sum_{i \in \N^k} \sum_{\kappa \in \N^k} 
\frac{c^{i-\kappa}}{(i-\kappa)!}z^i \sum_{\{\tau : s(\tau) = \kappa\}} 
r(\tau)c^{s(\tau)} \tag{$\ast$}\\
&= \sum_{i \in \N^k} \sum_{\kappa \in \N^k} 
\sum_{\{\tau : s(\tau) = \kappa\}}  
r(\tau)c^{s(\tau)}\frac{c^{i-s(\tau)}}{(i-s(\tau))!} \, z^i \\
&=\sum_{i \in \N^k} \sum_{\tau \in T} r(\tau)
c^{s(\tau)}\frac{c^{i-s(\tau)}}{(i-s(\tau))!}z^i.
\end{align*}
This proves the theorem.
\end{proof}

\section{Example}

Let us give a simple example of an equilibrium solution of a master
equation derived from a complex balanced equilibrium solution of the
rate equation. Consider the following stochastic Petri net with two
species:
\[
\begin{tikzpicture}
\node [shape=circle,draw=black] at (-1.2,0) {$1$};
\node [shape=rectangle,minimum height=15pt,minimum width=15pt,inner sep=0pt,draw=black] at (0,0.8) {$\alpha$};
\node [shape=rectangle,minimum height=15pt,minimum width=15pt,inner sep=0pt,draw=black] at (0,-0.8) {$\beta$};
\node [shape=circle,draw=black] at ( 1.2,0) {$2$};
\draw [->, thick,shorten >=12pt, shorten <=12pt]  (-1.25,0.1) to (0,0.9) ; 
\draw [->, thick,shorten >=12pt, shorten <=12pt]  (0,0.8) to (1.25,0) ;
\draw [->, thick,shorten >=12pt, shorten <=12pt]  (0,1.1) to (1.3,0.25) ;
\draw [->, thick,shorten >=12pt, shorten <=12pt]  (1.25,0) to (0,-0.8);
\draw [->, thick,shorten >=12pt, shorten <=12pt]  (1.3,-0.25) to (0,-1.1); 
\draw [->, thick,shorten >=12pt, shorten <=12pt]  (0,-0.9) to (-1.25,-0.1) ; 
\end{tikzpicture}
\]
This represents a system where a single molecule of a diatomic gas
splits into two individual atoms with rate constant $\alpha$, while
two atoms combine to form a molecule with rate constant $\beta$.  The
rate equation is:
\begin{align*} 
\frac{d x_1}{d t} &= -\alpha x_1 + \beta x_2^2 \\ 
\frac{d x_2}{d t} &=  2\alpha x_1 - 2\beta x_2^2
\end{align*}
where $x_1$ and $x_2$ are the concentrations of species 1 and 2,
respectively. Equilibrium occurs precisely when $\alpha x_1 = \beta
x_2^2$.

The deficiency zero theorem implies that any equilibrium solution of
the rate equation for this stochastic Petri net is complex balanced
\cite{F,HJ}.  Thus, we can apply the Anderson--Cranciun--Kurtz theorem
to obtain equilibrium solutions of the master equation. The master
equation takes the form
\[
\frac{d}{d t} \Psi(t) = H \Psi (t)
\]
with
\[
H = \alpha (a_2^\dagger a_2^\dagger - a_1^\dagger) a_1 + 
\beta (a_1^\dagger - a_2^\dagger a_2^\dagger )a_2 a_2
= (z_2^2 - z_1) \, \left(\alpha \frac{\partial}{\partial z_1}
- \beta \frac{\partial^2}{\partial z_2^2}\right).
\]
Given the complexity of this Hamiltonian, it might seem difficult to
find exact solutions of $H \Psi = 0$.  But by the
Anderson--Craciun--Kurtz theorem, whenever $c$ gives a complex
balanced equilibrium for the rate equation, meaning $\alpha c_1 =
\beta c_2^2$, the following coherent state is a solution:
\[
\Psi_c = \frac{e^{c_1 z_1 + c_2 z_2}}{e^{c_1 + c_2}}.
\]
To see this, we need only note that
\[
(z_2^2 - z_1) \left(\alpha \frac{\partial}{\partial z_1} -
 \beta \frac{\partial^2}{\partial z_2^2}\right) e^{c_1 z_1 + c_2 z_2} = 
(z_2^2 - z_1) (\alpha c_1 - \beta c_2^2) e^{c_1 z_1 + c_2 z_2} .
\]
This vanishes precisely when $\alpha c_1 = \beta c_2^2$.

Note that
\[
\Psi_c = \frac{1}{e^{c_1} e^{c_2}} \sum_{(n_1,n_2) \in \N^2} 
\frac{c_1^{n_1}}{n_1!} \frac{c_2^{n_2}}{n_2!}z_1^{n_1}z_2^{n_2},
\]
so
\[
\psi_{n_1,n_2} = \frac{1}{e^{c_1}} \frac{c_1^{n_1}}{n_1!} \; 
\cdot \; \frac{1}{e^{c_2}} \frac{c_1^{n_2}}{n_2!}.
\]
This is a product of two independent Poisson distributions. It may
seem disturbing that in a reaction where one species turns into
another there are equilibrium solutions where the numbers of things of
each kind are independent random variables.  Intuitively, it might
seem that having more things of species 1 would force there to be
fewer of species 2.

In fact this intuition is misleading.  However, there are other
equilibrium solutions of the master equation that are not products of
independent probability distributions.  To see this, suppose we take
any of our equilibrium solutions $\Psi_c$ and construct a state
$\Psi^{(n)}$ by setting the probability $\psi^{(n)}_{n_1,n_2}$ equal
to $\lambda\psi_{n_1,n_2}$ when $2n_1 + n_2 = n$ and zero otherwise,
where $\lambda$ is the constant required for $\Psi^{(n)}$ to be
normalized.

It is readily checked that $\Psi^{(n)}$ is also an equilibrium
solution for our reversible chemical system.  The key observation is
that this particular stochastic Petri net has a `conserved quantity':
the total number of atoms, $2n_1 + n_2$.  A pure state $(n_1,n_2)$ may
only transition to another pure state $(n_1',n_2')$ if $2n_1+n_2 =
2n_1'+n_2'$.  This allows us to take any equilibrium solution of the
master equation and, in the language of quantum mechanics, `project
down to the subspace' where this conserved quantity takes a definite
value, to obtain a new equilibrium solution.  In the language of
probability theory, this is `conditioning on' the conserved quantity
taking a definite value.

Thanks to the stochastic version of Noether's theorem, conserved
 quantities correspond to symmetries of the Hamiltonian \cite{BF}.  In
 the example at hand, the conserved quantity is
\[   O = 2N_1 + N_2,   \]
where $N_i$ is the \define{number operator} for the $i$th species:
\[   N_i = a_i^\dagger a_i  .\]
The states $z^n = z_1^{n_1} z_2^{n_2}$ are a basis of eigenvectors of
these number operators:
\[   N_i z^n = n_i z^n,  \]
so they are also a basis of eigenvectors of $O$:
\[    O z^n = (2n_1 + n_2) z^n.  \]
By the argument already given, time evolution according to the
Hamiltonian $H$ preserves each eigenspace of $O$, so we have
\[   [\exp(t H), O] = 0  \]
for all $t \ge 0$, where the commutator $[A,B]$ of operators is
defined to be $AB - BA$.  This says that $O$ is a conserved quantity.

On the other hand, we also have
\[   [H ,\exp(s O)] = 0 \]
for all $s \in \R$.  This says that $O$ generates a 1-parameter group
of symmetries of the Hamiltonian $H$.  This symmetry can be seen to
have the following effect on coherent states:
\[   \exp(sO) \Psi_c = \frac{e^{c'}}{e^c} \Psi_{c'}  \]
where
\[           (c'_1, c'_2) = (e^{2s} c_1, e^s c_2)  .\]
Note that if $c$ is an equilibrium solution of the rate equation, so
is $c'$.

Modulo nuances of analysis involving unbounded operators, both $
[\exp(t H), O] = 0 $ and $[H , \exp(s O)] = 0$ are equivalent to the
more easily checked equation
\[    [H,O] = 0 ,\]
and both are also equivalent to
\[   [\exp(t H), \exp(s O)] = 0 \]
for all $s, t \in \R$.  This last equation says that applying the
symmetry $\exp(sO)$ to a state $\Psi$ and then evolving it for a time
$t$ according to the master equation gives the same result as first
evolving $\Psi$ for a time $t$ according to the master equation and
then applying the symmetry.

This example illustrates some general phenomena.  In chemical
reactions, the total number of atoms of each element is conserved, so
there will always be some conserved quantities $O_1, \dots, O_n$ that
are linear combinations of number operators.  Such conserved
quantities always commute, since number operators commute.  Given a
complex balanced classical state $c$ we can obtain an equilibrium
solution $\Psi_c$ of the master equation using the
Anderson--Craciun--Kurtz theorem.  We can then obtain other
equilibrium solutions of the master equation in two ways:

\begin{itemize}
\item We can project down $\Psi_c$ to any subspace where the
observables $O_i$ take definite values:
\[           \{ \Psi : O_i \Psi = \lambda_i \Psi  \} \]
for some nonnegative numbers $\lambda_i$.
\item We can apply symmetries of the form $\exp(s_1 O_1 + \cdots + s_n
O_n)$ to $\Psi_c$.  As well known in quantum mechanics, the result
will be, up to normalization, another coherent state:
\[       \exp(s_1 O_1 + \cdots + s_n O_n) \Psi_c = 
\frac{e^{c'}}{e^c} \Psi_{c'}  \]
\end{itemize}

\subsection*{Acknowledgements}
This work was done at the Centre for Quantum Technologies, Singapore,
which supported BF with a summer internship.


\begin{thebibliography}{100}

\bibitem{ACK} D.\ F.\ Anderson, G.\ Craciun, and T.\ G.\ Kurtz,
Product-form stationary distributions for deficiency zero chemical
reaction networks, {\sl Bull.\ Math.\ Bio.} {\bf 72} (2010) 1947--70. 
Also available as \href{http://arxiv.org/abs/0803.3042}
{arXiv:0803.3042}.

\bibitem{AK} D.\ F.\ Anderson and T.\ G.\ Kurtz, Continuous time
Markov chain models for chemical reaction networks, in {\sl Design and
Analysis of Biomolecular Circuits: Engineering Approaches to Systems
and Synthetic Biology}, H.\ Koeppl {\it et al.}, eds., Springer,
2011. Also available at
\href{http://www.math.wisc.edu/~anderson/papers/SURVEY_AndKurtz.pdf}
{http:/$\!$/www.math.wisc.edu/$\sim$anderson/papers/SURVEY$\underline{\;\;}$AndKurtz.pdf}.

\bibitem{A} R.\ Aris, Prolegomena to the rational analysis of systems
of chemical reactions, {\sl Arch.\ Rational Mech.\ Anal.}, {\bf 19}
(1965), 81--99.

\bibitem{B} J.\ Baez, Quantum techniques for reaction networks, 
available as \href{http://arxiv.org/abs/1306.3451} {arXiv:1306.3451}.


\bibitem{BB} J.\ Baez and J.\ Biamonte, A course on quantum techniques
for stochastic mechanics, available as
\href{http://arxiv.org/abs/1209.3632} {arXiv:1209.3632}.

\bibitem{BF} J.\ Baez and B.\ Fong, A Noether theorem for Markov
processes, {\sl J.\ Math. Phys.} {\bf 54} (2013), 013301.  Also available
as \href{http://arxiv.org/abs/1203.2035} {arXiv:1203.2035}.

\bibitem{D} M.\ Doi, Second quantization representation for classical
many-body systems, {\sl J.\ Phys.\ A} {\bf 9} (1976), 1465--1477.

\bibitem{F} M.\ Feinberg, Chemical reaction network structure and the
stability of complex isothermal reactors: I. The deficiency zero and
deficiency one theorems, {\sl Chemical Engineering Science} {\bf 42}
(1987), 2229--2268.  Also available at \hfill \break
\href{http://www.seas.upenn.edu/~jadbabai/ESE680/Fei87a.pdf}
{http://www.seas.upenn.edu/$\sim$jadbabai/ESE680/Fei87a.pdf}.

\bibitem{GP} P.~J.~E. Goss and J.\ Peccoud, Quantitative modeling of 
stochastic systems in molecular biology by using stochastic Petri nets,
{\sl Proc. Natl. Acad. Sci. USA} {\bf 95} (1998), 6750--6755.

\bibitem{G} J.\ Gunawardena, Chemical reaction network theory for
\textit{in-silico} biologists, lecture notes, available at
\href{http://vcp.med.harvard.edu/papers/crnt.pdf}
{http://vcp.med.harvard.edu/papers/crnt.pdf}.

\bibitem{GS} P.\ Grassberger and M.\ Scheunert, Fock-space methods for
identical classical objects, {\sl Fortsch.\ Phys.} {\bf 28} (1980),
547--578.

\bibitem{HJ} F.\ Horn and R.\ Jackson, General mass action kinetics,
{\sl Arch.\ Rat.\ Mech.\ Analysis}, {\bf 47}, 81--116, 1972.

\bibitem{K} I.\ Koch, Petri nets---a mathematical formalism to analyze 
chemical reaction networks, {\sl Molecular Informatics}, {\bf 29} (2010),
838--843.

\bibitem{KRS} I.\ Koch, W.\ Reisig and F.\ Schreiber, {\sl Modeling
in Systems Biology}, Springer, 2011.

\bibitem{KS} J.\ R.\ Klauder and B.\ Skagerstam, {\sl Coherent States}, 
World Scientific, Singapore, 1985.

\bibitem{PR} C.\ A.\ Petri and W.\ Reisig, Petri net, {\sl
Scholarpedia} {\bf 3} (2008), 6477.  Available at
\href{http://dx.doi.org/10.4249/scholarpedia.6477}
{doi:10.4249/scholarpedia.6477}.

\bibitem{Wil06} D.\ J.\ Wilkinson, {\sl Stochastic Modelling for
Systems Biology}, Taylor and Francis, New York, 2006.

\end{thebibliography}
\end{document}